\def\bfR{\mbox{\boldmath$R$}}
\def\eps{\epsilon}
\def\div{{\rm div}}
\def\ds{\displaystyle}
\def\bfR{\mbox{\boldmath$R$}}
\def\div{{\,\textrm{div}\,}}
\begin{document}
\title{Bacterial chemotaxis without gradient-sensing}

\author{Changwook Yoon\thanks{This work has been done as a part of the first author's Ph.D. thesis.} and Yong-Jung Kim}
\institute{$^1$Department of Mathematical Sciences, KAIST, ~291 Daehak-ro, Yuseong-gu, Daejeon 305-701, Korea\\
\\
E-mails: chwyoon@gmail.com (C. Yoon),
yongkim@kaist.edu (Y.-J. Kim)\\
}
\date{July 23, 2013}
% The correct dates will be entered by the editor
\maketitle

\markboth{C. Yoon and Y.-J. Kim}{Bacterial chemotaxis without gradient-sensing}
\begin{abstract}
Models for chemotaxis are based on gradient sensing of individual organisms. The key contribution of Keller and Segel \cite{KS1,KS2} is showing that erratic movements of individuals may result in an accurate chemotaxis phenomenon as a group. In this paper we provide another option to understand chemotactic behavior when individuals do not sense the gradient of chemical concentration by any means. We show that, if individuals increase their motility to find food when they are hungry, an accurate chemotactic behavior is obtained without sensing the gradient. Such a random dispersal has been suggested by Cho and Kim \cite{CK} and is called starvation driven diffusion. This model is surprisingly similar to the original derivation of Keller-Segel model. A comprehensive picture of traveling band and front solutions is provided with numerical simulations.
\end{abstract}

\begin{keywords}
chemotaxis, phase plane analysis, starvation driven diffusion, Keller-Segel equation, pulse and front type traveling waves
\end{keywords}

\section{Introduction}

The chemotactic behavior of single-celled organisms such as \emph{Escherichia coli} \cite{A1,A2} and \emph{Dictyostelium} \cite{Bonner} has been well known for a long time. This bacterial chemotaxis is one of the most studied systems in biology and one may find a detailed mechanism of the system and related mathematical approaches from Tindall \emph{et al.} \cite{MR2430318,MR2430317}. It is quite a surprising and mysterious phenomenon that microscopic scale organisms figure out the macroscopic scale concentration gradient and find the correct direction to find food. The main contribution of the classical Keller-Segel model on this issue is to show that, even if the individuals sense the gradient inaccurately, the chemotactic phenomenon is accurate in an averaged sense.

The purpose of this paper is to go one step further from Keller and Segel's idea and conclude that, even if individual organisms cannot sense the gradient at all, accurate chemotactic behavior is obtained. The required hypothesis is that individuals increase their motility to find food when they are hungry. More specifically, we will classify the types of traveling wave solutions and find the necessary and sufficient conditions for the existence of one-dimensional traveling wave solutions of a chemotaxis model,
\begin{equation}\label{Eqn}
	\left\{
        \begin{aligned}
		u_t&= ~~(\gamma\,u)_{xx},\\
		m_t&= -k(m)u,
		\end{aligned}
	\right.
\quad x\in\bfR,\ t>0.
\end{equation}
In this model $u\ge0$ is the population density of a biological species, $m\ge0$ is the nutrient concentration and $k(m)\ge0$ is the consumption rate. The main feature of the model is the nonconstant motility $\gamma>0$ of the species. It is assumed that the species reduce the motility if there are more food to stay in the place. However, if the place is too crowed, then they increase the motility since the place is too competitive. Hence we set $\gamma=\gamma(m,u)$ with assumptions of
\begin{equation}\label{gamma_u}
\gamma_u\ge0\quad\mbox{and}\quad\gamma_m\le0.
\end{equation}
These hypotheses imply that starvation increases the motility and hence the corresponding dispersal process is called a \emph{starvation driven diffusion}. Then, the first equation in (\ref{Eqn}) is written as
\begin{equation}\label{Eqn2}
u_t=\big((\gamma+\gamma_uu)u_x+\gamma_mu\,m_x\big)_x,
\end{equation}
which now contains both advection and diffusion. In the appendix section a brief modeling procedure of a starvation driven diffusion is given.

Consider the traveling wave phenomenon of chemotactic bacteria found by Adler \cite{A1}. In his experiments E-coli colony is placed at one end of capillary tube filled with galactose and produces a traveling wave band of the bacteria moving to the other end. This phenomenon has been mathematically modeled by Keller and Segel \cite{KS1,KS2}. Let us compare the above chemotaxis model to the the classical Keller-Segel:
\begin{equation}\label{KS}
	\left\{
        \begin{aligned}
		u_t&= (\mu (m) u_x-\chi(m) u\,m_x)_x,\\
		m_t&= \eps\, m_{xx} - k(m)u,
		\end{aligned}
	\right.	
\end{equation}
where $\eps>0$ is the diffusivity of the nutrient. (The population reaction term is dropped in this discussion.) The diffusivity $\mu(m)>0$ and the chemosensitivity $\chi(m)>0$ satisfy a relation
\begin{equation}\label{chi-mu}
\chi(m)=-(1-a)\mu'(m),
\end{equation}
where $a<1$ is the constant ratio of effective body length of bacteria and the diffusivity is a decreasing function of nutrient concentration
$$
\mu'(m)\le0.
$$
Therefore, except the constant coefficient $1-a$ in (\ref{chi-mu}), the first equation in (\ref{KS}) is identical to (\ref{gamma_u}) for the special case that the motility is a function of $m$ only, i.e., when $\gamma=\gamma(m)$.

Traveling wave solutions of (\ref{KS}) have been intensively studied after various simplifications. First, Keller and Segel \cite{KS2} found explicit traveling waves by themselves for the case that
$$
\eps=\mu'=k'=0\quad\mbox{and}\quad \chi(m)=m^{-1}.
$$
The last chemosensitivity is based on Weber-Fechner law that individuals senses the relative gradient, but not the absolute gradient. This term gives a super sensitivity that diverges as $m\to0$, which makes the traveling wave permanent against the diffusive dynamics with constant diffusivity $\mu>0$. The traveling wave solution has bee studied for non-constant consumption rates,
$$
k(m)=m^p,\qquad p\ge0.
$$
The case with $0\le p<1$ gives traveling band solutions and the existence, stability and convergence as $\eps\to0$ has been intensively studied by many authors (see \cite{MR1167896,MR0411681,LS,MR2684162,MR1138847,R1,R2,R3,R4,MR2817814}).  The case with $p=1$ gives traveling fronts instead of bands. In particular, the case models vascular growth of angiogenesis and has been well-documented (see \cite{Chaplain,MR1969568,MR2099126,MR1828815,MR3000369}).

The diffusivity $\mu$ and chemotactic coefficient $\chi$ are considered independently in most chemotaxis models and the original relation of (\ref{chi-mu}) has been forgotten in the literature. Under this relation, the diffusivity becomes $\mu\cong{1\over a-1}\ln(m)+c_0$ if $\chi(m)={1\over m}$. Therefore, the diffusion also blows up as $m\to0$ and we will answer if there exists a traveling wave solution even with such a super sensitivity of diffusion. The analysis of this paper is based on such a relation between the diffusivity and the chemotactic coefficient in the frame of starvation driven diffusion written in (\ref{Eqn2}), which even includes the population dependency in the diffusion and chemosensitivity.

We assume that organisms only know the amount of food they obtains. Let
\begin{equation}\label{satisfaction}
s={m\over u},
\end{equation}
which measures the amount of food available for each individual and is called a \emph{satisfaction measure}. Individuals may not know the size resources $m$ nor the size population. However, they will know if they are getting enough food or not. We will abuse notation by writing $\gamma(m,u)=\gamma({m\over u})$ and the motility is a decreasing function on the satisfaction measure $s$. Then,
$$
\gamma_m=\gamma'(s){1\over u}\le0\mbox{~~~and~~~}
\gamma_u=-\gamma'(s){m\over u^2}\ge0,
$$
which satisfies the hypotheses in (\ref{gamma_u}). Furthermore, $\gamma$ is assumed to be smooth on $(0,\infty)$ and bounded below away from zero, i.e.,
\begin{equation}\label{alpha}
\lim_{s\to\infty}\gamma(s)=\ell>0,\quad \gamma'(s)\le0\mbox{~~on~~}(0,\infty).
\end{equation}

The consumption rate $k(m)$ is assumed to satisfy two hypotheses:
\begin{equation}\label{k}
k'(m)\ge0,\quad \lim_{m\to0}k(m)=0.
\end{equation}
The first one assumes that organisms consume more if there are more food. The second one indicates that the organisms do not consume at all if there is nothing to eat of course. Two typical examples in the literature are $k(m)=m^p$ and $k(m)={m^p\over 1+m^p}$ with $p>0$. The last one has bounded consumption rate as $m\to\infty$, which is more realistic. However, since we always consider uniformly bounded nutrient concentration $m\ge0$ and hence there is no essential difference in our context. However, the choice of $p$ makes a difference. The regime with $p\le1$ is more realistic since the consumption rate usually changes slower than the the amount of nutrient. However, the case with $p>1$ provides mathematically interesting phenomena and hence we also consider both cases.

\section{Main Results}

The traveling wave solution of a traveling speed $c\in\bfR$ is a solution in the form of
$$
u(x,t)=u(x-ct),\quad m(x,t)=m(x-ct),
$$
where we are abusing notations. Introduce the variable for the traveling wave solution
$$
\xi=x-ct.
$$
Then, from the equation (\ref{Eqn}), we obtain a system of ODEs
\begin{eqnarray}\label{TravelingEqn}
&&c\,u'= -(\gamma(s)u)'',\quad \xi\in\bfR,\\
&&c\,m'= k(m)u,\qquad~ \xi\in\bfR,\label{TravelingEqn1}
\end{eqnarray}
where the notation $'$ is for the ordinary differentiation for single variable functions. The boundary conditions are
\begin{equation}\label{BC}
u'\to0,\ m'\to 0,\ u\to u_\pm\ge0,\ m\to m_\pm\ge0 ~~\text{as } \xi \to \pm\infty.
\end{equation}

\begin{lemma}\label{lem1}
Let $(u(\xi),m(\xi))$ be a nonnegative traveling wave solution of (\ref{TravelingEqn})--(\ref{BC}). If $u\not\equiv0$, then $m(\xi)$ is strictly monotone.
\end{lemma}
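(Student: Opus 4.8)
\medskip
\noindent\emph{Proof sketch (plan).} \textbf{Step 1 (monotonicity).} The first move is to read off the sign of $m'$ from the nutrient equation. By (\ref{k}) we have $k(m)\ge 0$, and $u\ge 0$ by hypothesis, so the right-hand side of (\ref{TravelingEqn1}) is nonnegative and $c\,m'(\xi)\ge 0$ for every $\xi$. If $c\neq 0$ this forbids $m'$ from changing sign, so $m$ is monotone; replacing $\xi$ by $-\xi$ (which sends $c$ to $-c$ and preserves the form of (\ref{TravelingEqn})--(\ref{BC})) we may assume $c>0$ and $m$ nondecreasing. The case $c=0$ is degenerate: then $k(m)u\equiv 0$ while (\ref{TravelingEqn}) forces $\gamma(s)u$ to be affine, hence (being bounded) constant, so $u\not\equiv 0$ yields $u>0$ everywhere and $k(m)\equiv 0$; I would treat this stationary profile separately or exclude it.

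\noindent\textbf{Step 2 (ruling out a plateau for $m$).} Suppose $m$ is not strictly monotone. Being nondecreasing it must then be constant, say $m\equiv m_0$, on some interval $[\alpha,\beta]$ with $\alpha<\beta$, and (\ref{TravelingEqn1}) gives $k(m_0)u\equiv 0$ there. If $k(m_0)>0$ we conclude $u\equiv 0$ on $[\alpha,\beta]$ at once. If $k(m_0)=0$, then since $k$ is nonnegative, nondecreasing and vanishes at $0$ by (\ref{k}) we have $k\equiv 0$ on $[0,m_0]$; as $0\le m\le m_0$ on $(-\infty,\beta]$, the nutrient equation forces $m\equiv m_0$ on the whole half-line $(-\infty,\beta]$. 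There $m$ is frozen, so (\ref{TravelingEqn}) becomes autonomous in $u$ and integrates once to $c\,u+\bigl(\gamma(m_0/u)\,u\bigr)'=A$; by (\ref{Eqn2}) and (\ref{alpha}) the effective diffusivity $\gamma+\gamma_u u=\gamma(s)-s\gamma'(s)\ge\ell>0$, so a phase-line analysis of the resulting first-order equation for $u$ together with the boundary data at $-\infty$ in (\ref{BC}) pins $A$ and forces $u$ to be constant on $(-\infty,\beta]$; hence either $u\equiv 0$ on an interval again, or the solution is a constant state, which I exclude as in Step~1. In either retained case there is an interval $I\subset\bfR$ with $u\equiv 0$ on $I$.

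\noindent\textbf{Step 3 (propagating the zero --- the crux).} It remains to show that $u\equiv 0$ on $I$ forces $u\equiv 0$ on $\bfR$, which contradicts $u\not\equiv 0$ and finishes the proof. The cleanest route is the parabolic strong maximum principle: by (\ref{Eqn2}) the density obeys $u_t=\bigl((\gamma+\gamma_u u)u_x+\gamma_m u\,m_x\bigr)_x$, a parabolic equation whose diffusion coefficient is bounded below by $\ell>0$, with $u\ge 0$; the traveling profile vanishing on $I$ makes the space-time solution $u(x-ct,t)$ vanish at an interior point, so $u$ vanishes identically. One can also argue purely at the ODE level: on $I$ we have $u\equiv 0$ and, since $\gamma(s)u\to\ell\cdot 0=0$ as $u\to 0$ with $m$ bounded away from $0$, also $\gamma(s)u\equiv 0$ on $I$, so the constant in $c\,u+(\gamma(s)u)'=A$ is $A=0$; then $(\gamma(s)u)'=-c\,u\le 0$, i.e.\ $\gamma(s)u$ is nonincreasing, nonnegative, and zero on $I$, hence zero on $[\sup I,\infty)$, which gives $u\equiv 0$ there, and the other side is closed using (\ref{BC}). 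I expect Step~3 to be the real obstacle: the effective diffusivity $\gamma(s)-s\gamma'(s)$ may blow up as $s\to 0$ (the Weber--Fechner regime discussed above) or the equation may degenerate as $u\to 0$, so the maximum-principle/unique-continuation argument has to be localized to regions where $s$ stays in a compact subset of $(0,\infty)$; Steps~1 and 2, by contrast, are just bookkeeping with the signs of $k$ and $\gamma'$.
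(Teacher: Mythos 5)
Your argument is correct in substance and turns on the same key fact as the paper's proof, but you run it in reverse and at greater length. The paper's proof is two lines: by (\ref{alpha}) the equation (\ref{Eqn2}) is uniformly parabolic, so (strong maximum principle) $u\not\equiv0$ forces $u(\xi)>0$ for every $\xi$; then (\ref{TravelingEqn1}) gives $m'=k(m)u/c\ne0$, hence strict monotonicity. Your Step~3 is precisely the contrapositive of that positivity statement, and you correctly single it out as the crux; your worry about degeneracy is unfounded, since the effective diffusivity $\gamma+\gamma_u u=\gamma(s)-s\gamma'(s)\ge\gamma(s)\ge\ell>0$ never degenerates (it can only blow up as $s\to0$, which is harmless for the maximum principle), and your self-contained ODE alternative (integrate once to $cu+(\gamma(s)u)'=A$, get $A=0$ from the zero set, then a Gronwall-type argument on $v=\gamma(s)u$ using $u\le v/\ell$ to close both half-lines) is a nice elementary substitute for the parabolic machinery. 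Your Steps~1--2 do bookkeeping the paper silently skips: the paper tacitly assumes $c\ne0$, and its identity $m'=k(m)u/c\ne0$ also tacitly assumes $k(m(\xi))\ne0$, which (\ref{k}) does not guarantee if $k$ vanishes on a whole interval $[0,m_0]$; you engage with that case honestly, though your phase-line claim that $u$ must then be constant on the frozen half-line is neither obviously true nor actually needed --- in that degenerate regime (which the paper implicitly excludes, as in all its examples $k>0$ for $m>0$) the conclusion of the lemma itself becomes questionable rather than your proof. In short: same engine as the paper, driven backwards, with more care at the edges.
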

\begin{proof}
Since the motility function satisfies (\ref{alpha}), the equation (\ref{Eqn2}) is uniformly parabolic. Since $u\not\equiv0$, $u(\xi)>0$ for all $\xi\in\bfR$ by the uniform parabolicity. Therefore, by (\ref{TravelingEqn1}), we have
$$
m'(\xi)={k(m) u\over c}\ne0.
$$
Hence, $m$ is strictly monotone.
$\hfill\qed$\end{proof}

If $u_-=u_+$, the solution is called a pulse type or a localized traveling wave. Otherwise, the traveling wave is called a front type. In other words, Lemma \ref{lem1} implies that traveling wave of $m$ is always a front type except a trivial one. Hence, without loss of generality, we may assume
\begin{equation}\label{Orientation}
0\le m_-<m_+.
\end{equation}
Then, the solution $m$ is strictly increasing and hence we may consider $\xi=\xi(m)$ as the inverse function of $m=m(\xi)$. This choice of boundary data, $m_+>0$, forces us to choose
\begin{equation}\label{u+}
u_+=0
\end{equation}
since that is the only case that makes $(m_+,u_+)$ be a steady state. One can easily see that this steady state is unstable.

\begin{lemma} \label{lem2}If $m_->0$ or $\int_0^{m_+}{1\over k(\eta)}d\eta<\infty$, then the population size of the species is finite. Hence $u_-=0$ is a necessary condition for the existence of a solution.
\end{lemma}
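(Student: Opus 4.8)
The approach is to convert the question into an explicit one-dimensional integral. Solving the second traveling-wave equation (\ref{TravelingEqn1}) for $u$ and using that $m$ is strictly monotone (Lemma~\ref{lem1}), the total population mass $\int_{\bfR}u(\xi)\,d\xi$ can be written as a single integral of $1/k$ over the $m$-range; the two stated hypotheses are then precisely what is needed for that integral to converge.

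Concretely, I would carry out the following steps. (i) Record the structure: by Lemma~\ref{lem1} and the normalization (\ref{Orientation}), $m$ is strictly increasing, so $m'>0$ on $\bfR$, and $u>0$ on $\bfR$ (uniform parabolicity, as in the proof of Lemma~\ref{lem1}). (ii) Fix the sign of $c$: from (\ref{TravelingEqn1}), $c\,m'=k(m)u\ge0$, so $c\ge0$; the case $c=0$ forces $k(m)u\equiv0$ and hence $k\equiv0$ on the range $(m_-,m_+)$ of $m$, which is excluded, so $c>0$, and then $k(m(\xi))=c\,m'(\xi)/u(\xi)>0$ and $u=c\,m'/k(m)$. (iii) Change variables $\eta=m(\xi)$, legitimate because $\xi\mapsto m(\xi)$ is a strictly increasing $C^1$ bijection of $\bfR$ onto $(m_-,m_+)$:
\[
\int_{\bfR}u(\xi)\,d\xi=c\int_{\bfR}\frac{m'(\xi)}{k(m(\xi))}\,d\xi=c\int_{m_-}^{m_+}\frac{d\eta}{k(\eta)} .
\]
(iv) Check convergence: if $\int_0^{m_+}d\eta/k(\eta)<\infty$ this is immediate since $m_-\ge0$; if $m_->0$, then $k$ is continuous and positive on the compact set $[m_-,m_+]$, hence bounded below there by $k(m_-)>0$, so the integral is at most $(m_+-m_-)/k(m_-)<\infty$. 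Either way the total mass is finite. (v) Conclude: if $u_->0$, then $u(\xi)>u_-/2$ on some half-line $(-\infty,-M)$ since $u(\xi)\to u_-$, contradicting finiteness of the mass; hence $u_-=0$, which is therefore a necessary condition for the existence of a nontrivial nonnegative traveling wave.

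I do not anticipate a genuine obstacle: the substance is the identity in step (iii), and the rest is routine. The two spots that need a little care are the sign argument in step (ii) --- in particular excluding $c=0$ --- and the integrability of $1/k$ near the left endpoint when $m_->0$, which is exactly where the hypothesis $m_->0$ is used (through $k(m_-)>0$).
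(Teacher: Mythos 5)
Your proposal is correct and follows essentially the same route as the paper: solve (\ref{TravelingEqn1}) for $u=c\,m'/k(m)$, integrate over $\bfR$, and change variables to obtain the total mass $c\int_{m_-}^{m_+}d\eta/k(\eta)$, which is finite under either hypothesis and hence forces $u_-=0$. The only difference is that you spell out the sign of $c$ and the lower bound $k\ge k(m_-)>0$ on $[m_-,m_+]$, details the paper leaves implicit.
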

\begin{proof} The equation (\ref{TravelingEqn1}) can be written as $u=c{m'\over k(m)}$. Its integration over the whole real line gives the total population of the species, which is
$$
\int_{-\infty}^\infty u(\xi)d\xi=c\int_{-\infty}^\infty {m'\over k(m)}d\xi =c\int_{m_-}^{m_+}{1\over k(m)}dm.
$$
Under the condition (\ref{k}) the right side is finite for all $m_->0$. Therefore, the population size is finite under the assumptions. If $u_->0$, then the population size becomes infinite and hence $u_-=0$ is necessary condition. $\hfill\qed$
\end{proof}

Lemma \ref{lem2} implies that there exists only a pulse type traveling wave solution if $\int_0^{m_+}{1\over k(\eta)}d\eta<\infty$. The main theorems of this paper are about sufficient and necessary conditions for the existence of a traveling wave solution of (\ref{TravelingEqn})--(\ref{TravelingEqn1}) with a given speed $c\in\bfR$. We first consider necessary conditions.

%%%%%%%%%%%%%%%%%%%%%%%%%%%%%%%%%%%
\begin{theorem}\label{thm1} Suppose that $0\le m_-<m_+<\infty$ and $u_+=0$ and consider the existence of a solution of (\ref{TravelingEqn})--(\ref{BC}).

%\item The nutrient distribution $m(\xi)$ is a strictly increasing function.
\noindent $(i)$ $m_-=0$ and $c>0$ are necessary conditions.

\noindent $(ii)$ If $\gamma(s)$ is bounded, there is no traveling wave solution.

\end{theorem}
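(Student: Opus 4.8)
The plan is to exploit the two integrated identities coming from the traveling wave system together with the sign constraints on the boundary data. For part $(i)$, I would start from the first equation (\ref{TravelingEqn}), which after one integration over $(\xi,\infty)$ and using the boundary conditions $u'\to0$, $u\to u_+=0$, $m\to m_+$ as $\xi\to+\infty$ gives a first-order relation of the form $c\,u = -(\gamma(s)u)' $, and then a second integration yields $c\int_\xi^\infty u\,d\eta = \gamma(s)u - \ell\cdot 0 = \gamma(s(\xi))u(\xi)$ (the right-hand limit vanishing because $u_+=0$ and $\gamma$ is bounded near $s_+$). Since $u>0$ everywhere by Lemma~\ref{lem1}'s parabolicity argument and $\gamma>0$, the left side $c\int_\xi^\infty u$ must be positive, forcing $c>0$. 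For $m_-=0$: suppose instead $m_->0$. Then by Lemma~\ref{lem2} the total population is finite and $u_-=0$ is forced; but then evaluating the twice-integrated identity as $\xi\to-\infty$ gives $c\int_{-\infty}^\infty u = \gamma(s_-)\cdot 0 = 0$ since $s_-=m_-/u_- $ — here I need to be careful, as $u_-=0$ and $m_->0$ make $s_-=\infty$, so I would instead use $\lim_{s\to\infty}\gamma(s)=\ell$ from (\ref{alpha}) and conclude $\gamma(s)u\to\ell\cdot u_-=0$, contradicting $c\int_{-\infty}^\infty u>0$. Hence $m_-=0$.

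For part $(ii)$, assume $\gamma(s)$ is bounded, say $\gamma(s)\le M$ for all $s>0$, and derive a contradiction. The key identity is again $c\int_\xi^\infty u(\eta)\,d\eta = \gamma(s(\xi))\,u(\xi) \le M\,u(\xi)$. Writing $U(\xi):=\int_\xi^\infty u(\eta)\,d\eta$, this reads $c\,U(\xi) \le -M\,U'(\xi)$, i.e. $U'/U \le -c/M$ (legitimate since $U>0$ on all of $\bfR$, as $u>0$ and, by part~$(i)$ together with Lemma~\ref{lem2} applied in the case $m_-=0$, one must check the total mass is finite — if $\int_0^{m_+}1/k<\infty$ this is immediate; if not, one argues the integral $\int_\xi^\infty u$ is still finite by the decay forced below). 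Integrating this differential inequality from a fixed point $\xi_0$ to $+\infty$ would force $U$ to decay, which is consistent; the contradiction instead comes from integrating toward $-\infty$: $U(\xi)\ge U(\xi_0)e^{(c/M)(\xi_0-\xi)}\to\infty$ as $\xi\to-\infty$, so $U(-\infty)=\int_{-\infty}^\infty u=\infty$. But the total mass is finite (Lemma~\ref{lem2}, using $m_-=0$ and the structure forcing convergence), a contradiction. Therefore no bounded-$\gamma$ traveling wave exists.

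The main obstacle I anticipate is the bookkeeping at $\xi\to-\infty$: I need the limit $\gamma(s(\xi))u(\xi)\to0$, and since $s=m/u$ with $m_-=0$ the behavior of $s$ as $\xi\to-\infty$ is not a priori clear — $u_-$ could be positive (making $s\to0$, where $\gamma$ might blow up) or zero (making $s$ indeterminate). So the careful part is showing $\gamma(s)u$ has the right limiting value in each sub-case: when $u_->0$ one uses $\gamma(s)u \le \gamma(0^+)u$ only if $\gamma$ is bounded near $0$, which is exactly the hypothesis of part~$(ii)$, and for part~$(i)$ one instead leverages that $u_->0$ is impossible by Lemma~\ref{lem2} once $m_-=0$ fails. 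I would organize part~$(i)$ as: first prove $c>0$ from positivity of $U$; then prove $m_-=0$ by contradiction via the finite-mass/$u_-=0$ mechanism and the $\gamma\to\ell$ limit; part~$(ii)$ then runs the Gronwall-type argument above. A secondary technical point is justifying that $U(\xi)=\int_\xi^\infty u$ is finite for all $\xi$ even when $\int_0^{m_+}1/k=\infty$ — here one uses that the already-established inequality $cU\le -MU'$ itself (valid on any interval where $U$ is finite) propagates finiteness, or alternatively one notes $u = c\,m'/k(m)$ is integrable near $+\infty$ because $m\to m_+>0$ keeps $k(m)$ bounded below there.
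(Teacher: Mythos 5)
Your part $(i)$ is correct and is, in substance, the paper's own argument: your twice-integrated identity $c\int_\xi^\infty u\,d\eta=\gamma(s(\xi))u(\xi)$ is exactly the exact relation (\ref{psi}) written in the $\xi$-variable (substitute $u=cm'/k(m)$ from (\ref{TravelingEqn1}) into the left side to get $c^2\int_{m}^{m_+}k(\eta)^{-1}d\eta$). The positivity of both factors gives $c>0$ (the paper gets this even more directly from $cm'=k(m)u$ with $m'>0$), and your contradiction for $m_->0$ — Lemma \ref{lem2} forces $u_-=0$, hence $s\to\infty$ and $\gamma(s)u\to\ell\cdot 0=0$, against the strictly positive left side — is precisely the paper's. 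Your bookkeeping of the limit of $\gamma(s)u$ at $-\infty$ in each sub-case is the right thing to worry about and is handled correctly there.

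Part $(ii)$ has a genuine gap in one sub-case. Your final contradiction is ``$U(-\infty)=\int_{-\infty}^\infty u=\infty$, but the total mass is finite by Lemma \ref{lem2}.'' By part $(i)$ you have $m_-=0$, and Lemma \ref{lem2} then yields finite total mass only when $\int_0^{m_+}k(\eta)^{-1}d\eta<\infty$. When that integral diverges the total population genuinely \emph{is} infinite (it equals $c\int_0^{m_+}k(\eta)^{-1}d\eta$), so there is no contradiction with Lemma \ref{lem2}, and the phrase ``the structure forcing convergence'' does not supply one. The correct contradiction in that sub-case — and the one the paper uses — is that the solution itself blows up: from your identity, $u(\xi)=cU(\xi)/\gamma(s(\xi))\ge (c/M)\,U(\xi)\to\infty$ as $\xi\to-\infty$ (equivalently, $Mu(m)\ge c^2\int_m^{m_+}k(\eta)^{-1}d\eta\to\infty$ as $m\to0$), which violates the boundary condition $u\to u_-<\infty$ in (\ref{BC}). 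This is a one-line repair using tools you already have on the page, but as written the proof of $(ii)$ does not close in the case $\int_0^{m_+}k(\eta)^{-1}d\eta=\infty$. With that fix, your Gronwall-type phrasing is a legitimate (if heavier) alternative to the paper's direct reading of the inequality $Au\ge c^2\int_m^{m_+}k(\eta)^{-1}d\eta$ in the phase plane.
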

%%%%%%%%%%%%%%%%%%%%%%%%%%%%%%%%%%%%

%%%%%%%%%%%%%%%%%%%%%%%%%%%%%%%%%%%
\begin{theorem}\label{thm2} Suppose that $0\le m_-<m_+<\infty$, $u_+=0$, and $\int_0^{m_+}{1\over k(\eta)}d\eta<\infty$. Then, there exists a traveling wave solution of (\ref{TravelingEqn})--(\ref{BC}) if and only if $u_-=m_-=0$ and $\gamma(s)\to\infty$ as $s\to0$. The total population of the traveling species is given by
\begin{equation}\label{population}
N:=c\int_0^{m_+}{1\over k(\eta)}d\eta.
\end{equation}
\end{theorem}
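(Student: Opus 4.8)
The plan is to reduce the problem to a phase-plane analysis of a single first-order ODE by exploiting the monotonicity of $m$. Since $\int_0^{m_+}\frac1{k(\eta)}\,d\eta<\infty$, Lemma \ref{lem2} already forces $u_-=0$, so the content is: (a) the remaining necessity of $m_-=0$ and $\gamma(s)\to\infty$ as $s\to0$, and (b) sufficiency. For necessity, Theorem \ref{thm1}(i) gives $m_-=0$ and $c>0$ directly; it remains to see why boundedness of $\gamma$ near $s=0$ is incompatible with a solution, which is exactly Theorem \ref{thm1}(ii) once one notes that at the right end $\xi\to+\infty$ we have $u\to u_+=0$ while $m\to m_+>0$, so $s=m/u\to\infty$ there, whereas at the left end $\xi\to-\infty$ we have $u\to0$ and $m\to m_-=0$, so the behavior of $s$ is governed by the rate at which each vanishes; the integrability of $1/k$ pins down $u$ in terms of $m$ and the divergence of $\gamma$ is what allows the flux to close up. So the real work is sufficiency.

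For sufficiency, assume $u_-=m_-=0$, $c>0$ and $\gamma(s)\to\infty$ as $s\to0$. I would first integrate \eqref{TravelingEqn} once: since $u'\to0$ and $(\gamma u)'\to0$ at $\pm\infty$, we get $c\,u=-(\gamma(s)u)'$, i.e. $(\gamma(s)u)'=-c\,u$. Next use \eqref{TravelingEqn1} in the form $u=c\,m'/k(m)$ to change the independent variable from $\xi$ to $m\in(0,m_+)$ (legitimate because $m$ is strictly increasing by Lemma \ref{lem1}). Writing everything as a function of $m$, set $w(m):=\gamma(s)u$; then $\frac{dw}{d\xi}=-c\,u$ becomes, after dividing by $m'=k(m)u/c$,
\begin{equation}\label{wode}
\frac{dw}{dm}=-\frac{c^2}{k(m)}.
\end{equation}
This integrates explicitly: $w(m)=w(0)-c^2\int_0^m\frac{d\eta}{k(\eta)}$. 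The boundary condition at $\xi\to+\infty$ is $u\to0$, $s\to\infty$, hence $w=\gamma(s)u\to\ell\cdot 0=0$ if $\gamma$ is bounded at infinity — which it is, by \eqref{alpha} — so $w(m_+)=0$, giving $w(0)=c^2\int_0^{m_+}\frac{d\eta}{k(\eta)}=cN$ and
\begin{equation}\label{wsol}
w(m)=c^2\int_m^{m_+}\frac{d\eta}{k(\eta)}=\gamma\!\left(\frac{m}{u}\right)u.
\end{equation}
Now \eqref{wsol} is, for each fixed $m\in(0,m_+)$, a single scalar equation determining $u=u(m)>0$: since $s\mapsto\gamma(s)$ is positive, nonincreasing, tends to $\ell>0$ as $s\to\infty$ and to $+\infty$ as $s\to0$, the map $u\mapsto u\,\gamma(m/u)$ is continuous, tends to $0$ as $u\to0^+$ (because then $s\to\infty$ and $\gamma\to\ell$) and to $+\infty$ as $u\to\infty$ (because $u\gamma(m/u)\ge \ell u$), hence takes the value $w(m)>0$; one gets uniqueness and smooth dependence on $m$ from strict monotonicity of $u\mapsto u\gamma(m/u)$, which follows from $\gamma'\le 0$. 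This defines $u(m)$ smoothly on $(0,m_+)$; then $m(\xi)$ is recovered by integrating $m'=k(m)u(m)/c$, and the total population formula \eqref{population} is immediate from the computation already done in the proof of Lemma \ref{lem2}.

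The main obstacle is checking that the recovered profile actually satisfies the boundary conditions at $\xi\to\pm\infty$ on the $\xi$-line, i.e. that the $m$-interval $(0,m_+)$ corresponds to the full real line and that $u,u',m'\to0$ at both ends. At the right end this is the statement that $\int^{m_+}\frac{dm}{k(m)u(m)}=\infty$; since $u(m)\to0$ as $m\to m_+$ (from \eqref{wsol}, $w(m_+)=0$ forces $u\to0$ because $\gamma\ge\ell>0$) and $k(m_+)>0$, the integrand blows up like $1/u(m)$ and one must quantify the rate — from \eqref{wsol}, $u(m)\sim w(m)/\ell \sim \frac{c^2}{\ell k(m_+)}(m_+-m)$, so the integral diverges logarithmically, as needed. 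At the left end one needs $\int_0\frac{dm}{k(m)u(m)}=\infty$; here as $m\to0$, $s=m/u\to0$ (this is where $\gamma(s)\to\infty$ is used: \eqref{wsol} gives $u\gamma(m/u)\to cN>0$, and if $s$ stayed bounded below then $u$ would stay bounded below, forcing $s=m/u\to0$, contradiction — so indeed $s\to0$), and one must show the resulting decay of $u$ is slow enough that the $\xi$-integral diverges; this is the delicate estimate and will use the precise hypothesis on how fast $\gamma(s)\to\infty$ together with $\int_0\frac{dm}{k(m)}<\infty$. Finally $u'\to0$ and $m'\to0$ follow once $u\to u_\pm=0$ and the equations are used to bound derivatives. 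I expect this asymptotic matching near $m=0$ to be the crux; everything else is bookkeeping on the explicitly solved system \eqref{wsol}.
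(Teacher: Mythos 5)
Your overall route is the same as the paper's: integrating (\ref{TravelingEqn}) once, dividing by $m'=k(m)u/c$, and arriving at the first integral $w(m)=\gamma(m/u)\,u=c^2\int_m^{m_+}k(\eta)^{-1}d\eta$ is exactly the paper's exactness argument leading to the relation $\psi(m,u)=0$ in (\ref{psi}); your monotonicity of $u\mapsto u\gamma(m/u)$ is the paper's $\psi_u=\gamma_u u+\gamma>0$, and your necessity argument (Lemma \ref{lem2} plus Theorem \ref{thm1}) matches the paper's. The paper then declares sufficiency as soon as the phase-plane trajectory satisfies $u(m)\to0$ as $m\to0$, which it proves by the same contradiction you sketch ($\gamma(m/u)u\to\infty$ if $u\to u_0>0$, while the right side stays finite).

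The genuine gap is the step you yourself flag and then leave undone: verifying that the reparametrization $\xi(m)=\int \frac{c\,dm}{k(m)u(m)}$ maps $(0,m_+)$ onto all of $\mathbf{R}$, i.e.\ that $\int_0\frac{dm}{k(m)u(m)}=\infty$ at the left end. This is not bookkeeping, and it cannot be closed from the stated hypotheses alone. Take $\gamma(s)=s^{-1/2}+1$ and $k(m)=m^{1/2}$: all hypotheses of the theorem hold, and the first integral forces $u(m)\sim C m^{1/3}$ as $m\to0$, so $\frac{1}{k(m)u(m)}\sim C'm^{-5/6}$ is integrable near $0$ and $\xi$ reaches a finite limit $\xi_0$ while the flux $\gamma u\to cN>0$ there; the profile cannot be continued past $\xi_0$ as a solution on the whole line. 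So your ``delicate estimate'' is in fact false in general, which means either an additional quantitative hypothesis relating the blow-up rate of $\gamma$ to the integrability of $1/k$ is needed, or ``traveling wave solution'' must be read in the weaker phase-plane sense the paper implicitly adopts (existence $\Leftrightarrow$ $u(m)\to u_-$ as $m\to m_-$). The paper's own proof silently identifies the two and never performs the $\xi$-range check; your proposal is more honest in raising it, but as written it is incomplete at precisely that point, and no amount of asymptotic matching will rescue it without strengthening the hypotheses.
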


Theorem \ref{thm2} gives the correct boundary condition and the motility function for the existence of a traveling wave solution. The relation (\ref{population}) shows that the wave speed $c$ is related to the population size $N>0$ and the consumption rate $k(m)$, but not to the motility function $\gamma$.

\begin{remark}
Keller and Odell \cite[Theorem]{MR0411681} found necessary and sufficient conditions for the existence of traveling wave band of finite size, which are
\begin{eqnarray}\label{kChi}
&&\lim_{m\to0+}{1\over k(m)\chi(m)}=0,\\
&&\lim_{m\to0+}\int_0^{m_+}{\exp\big(-\int_m^{m_+} {\chi(\eta)\over\mu(\eta)}d\eta\big)\over \mu(m) k(m)}\,dm=0.
\end{eqnarray}
For the case with
$$
\mu= m^r,\quad \chi= m^{-q}\quad \mbox{and}\quad k=m^p,
$$
the corresponding cases are
$$
p<\min(1,q)\quad\mbox{and}\quad r+q\ge1.
$$
These conditions are simplified in our case as the unboundedness of $\gamma$ as $s\to0$. $\hfill\qed$
\end{remark}

Next, we will go one step further for the case $\int_0^{m_+}{1\over k(m)}dm=\infty$. The condition (\ref{kChi}) appears in a slightly modified form in the following theorem. In this case we obtain pulse type traveling wave solutions of infinite mass and front type ones. Non-existence of a traveling wave solution is also shown.

%%%%%%%%%%%%%%%%%%%%%%%%%%%%%%%%%%%%
\begin{theorem}\label{thm3} Suppose that $0\le m_-<m_+<\infty$, $u_+=0$, $\int_0^{m_+}{1\over k(\eta)}d\eta=\infty$, and $\gamma(s)$ is convex in a neighborhood of $s=0$.

\noindent $(i)$ Suppose that there exists $0<u_0<\infty$ such that
\begin{equation}\label{speed}
-\lim_{m\to0}\gamma'\Big({m\over u_0}\Big)k(m)=c^2.
\end{equation}
Then, there exists a traveling wave solution of (\ref{TravelingEqn})--(\ref{BC}) if and only if  $u_-=u_0$ and $m_-=0$.

\noindent $(ii)$ Suppose that
\begin{equation}\label{speed2}
-\lim_{m\to0}\gamma'\Big({m\over u_0}\Big)k(m)>c^2
\end{equation}
for any $u_0>0$. Then, there exists a traveling wave solution of (\ref{TravelingEqn})--(\ref{BC}) if and only if  $u_-=m_-=0$. Furthermore, this pulse type traveling wave solution has infinite total population.

\noindent $(iii)$ Otherwise, there is no traveling wave solution of any type.
\end{theorem}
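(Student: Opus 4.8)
The plan is to integrate the profile system into a single explicitly invertible algebraic relation, deduce that this forces the left boundary data, and then extract the trichotomy from the sign of one scalar ODE as $m\to0$. Throughout, $c\ne0$ (otherwise (\ref{TravelingEqn1}) forces $u\equiv0$), so by Lemma~\ref{lem1} $m$ is strictly increasing and may be used as independent variable. Integrating (\ref{TravelingEqn}) once: $\gamma(s)u$ is bounded (it tends to $\ell\,u_+=0$) while its derivative equals $-cu+A$, which converges to $A$; a bounded function with convergent derivative forces $A=0$, so $(\gamma(s)u)'=-cu$. Substituting $u=cm'/k(m)$ from (\ref{TravelingEqn1}) and integrating again (the $m_+$ boundary term vanishes since $u(m_+)=0$) gives, for $m\in(m_-,m_+)$,
$$\gamma(s(m))\,u(m)=F(m):=c^2\!\int_m^{m_+}\!\frac{d\eta}{k(\eta)}\,.$$
With $s(m)u(m)=m$ this becomes $\gamma(s)/s=F(m)/m$. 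Since $s\mapsto\gamma(s)/s$ is strictly decreasing from $+\infty$ to $0$ (its derivative is $(s\gamma'-\gamma)/s^2<0$) and $m\mapsto F(m)/m$ is strictly decreasing from $+\infty$ to $0$ on $(0,m_+)$, the value $s(m)$ — hence $u(m)=m/s(m)$ — is uniquely determined by $c,\gamma,k,m_+$. So the entire profile, and in particular the pair $(u_-,m_-)$, is forced; the only question is where the profile terminates on the left.

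If $m_->0$ then Lemma~\ref{lem2} forces $u_-=0$, but then $s(m)\to\infty$ as $m\to m_-^+$, so $\gamma(s(m))u(m)\to\ell\cdot0=0\ne F(m_-)>0$, a contradiction; hence $m_-=0$. Then $F(m)/m\to\infty$ as $m\to0^+$ forces $s(m)\to0$, and I set $u_*:=\lim_{m\to0^+}u(m)\in[0,\infty]$ (on a genuine solution this exists by (\ref{BC})); also $\int u\,d\xi=c\int_0^{m_+}k^{-1}=\infty$ by hypothesis, which is the infinite–mass claim of $(ii)$. Combining the two relations, the profile obeys the scalar ODE
$$\frac{du}{dm}=\frac{R(m,u)-c^2}{k(m)\,\Gamma(m/u)}\,,\qquad \Gamma(s):=\gamma(s)-s\gamma'(s)>0,\quad R(m,u):=-\gamma'(m/u)\,k(m)\ge0,$$
so $du/dm$, equivalently $(\log u)'$, has the sign of $R(m,u)-c^2$. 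Near $m=0$ the convexity of $\gamma$ makes $u\mapsto R(m,u)$ nondecreasing (since $\partial_uR=\gamma''(m/u)\,\tfrac{m}{u^2}\,k(m)\ge0$) and makes $\gamma'(m/u_0)$ converge as $m\to0$ for each fixed $u_0$; moreover, using $\int_0 k^{-1}=\infty$, one checks that $\int_0\bigl(k(m)\Gamma(m/u(m))\bigr)^{-1}dm=\infty$ along the profile. Hence $u_*$ can be finite only if $R(m,u(m))-c^2$ is small in an averaged sense near $0$, and the monotonicity of $R(m,\cdot)$ together with the regularity from convexity upgrades this to the pointwise dichotomy: $u_*\in(0,\infty)$ forces $R(m,u_*)\to c^2$, which is exactly (\ref{speed}) with $u_0=u_*$; $u_*=0$ forces, for every $u_0>0$, that $-\lim_{m\to0}\gamma'(m/u_0)k(m)$ exists and exceeds $c^2$, which is (\ref{speed2}); and in every other configuration $u_*=+\infty$ or $\lim_{m\to0}u(m)$ does not exist, contradicting (\ref{BC}). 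This settles necessity in $(i)$ and $(ii)$ and all of $(iii)$.

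For sufficiency, under the hypothesis of $(i)$ (resp.\ $(ii)$) the determined profile $u(m)=m/s(m)$ has $u(m)\to u_0$ (resp.\ $\to0$) as $m\to0$. Put $\xi(m):=-c\int_m^{m_0}\bigl(k(\eta)u(\eta)\bigr)^{-1}d\eta$; then $\xi(m)\to-\infty$ as $m\to0^+$ — because $\int_0 k^{-1}=\infty$ and $u$ is bounded below near $0$ in $(i)$, while in $(ii)$, although $u\to0$, $1/(ku)$ is still non-integrable at $0$ — and $\xi(m)\to+\infty$ as $m\to m_+^-$, because the identity for $F$ gives $u(m)\sim\tfrac{c^2}{\ell\,k(m_+)}(m_+-m)$ there. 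Thus $m(\xi)$ is globally defined on $\bfR$, and $(u(\xi),m(\xi))$ with $u(\xi):=u(m(\xi))$ solves (\ref{TravelingEqn})--(\ref{TravelingEqn1}) (the reduction run backwards), with $m'=k(m)u/c\to0$ and $u'=(R-c^2)u/\bigl(c\,\Gamma(m/u)\bigr)\to0$ at $\xi=-\infty$. The wave is a front ($u_-=u_0>0=u_+$) in $(i)$ and a pulse of infinite mass ($u_-=u_+=0$) in $(ii)$, as claimed.

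The main obstacle is the asymptotic analysis at $m=0$ in the second paragraph: converting the averaged control of $R(m,u(m))-c^2$ into the exact pointwise limits of (\ref{speed})--(\ref{speed2}), and ruling out oscillation of $u(m)$. This is precisely where convexity of $\gamma$ near $s=0$ is indispensable — monotonicity of $R(m,\cdot)$ provides the squeeze, and monotonicity of $\gamma'$ guarantees that the limits $-\lim_{m\to0}\gamma'(m/u_0)k(m)$ organizing the three cases even exist. By contrast the degenerate endpoint $(m_+,0)$ is harmless: the identity $\gamma(s)u=F(m)$ already pins down $u(m)\sim\tfrac{c^2}{\ell\,k(m_+)}(m_+-m)$, so local existence and uniqueness of the profile there need no separate argument.
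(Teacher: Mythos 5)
Your proposal is correct and follows essentially the same route as the paper: both reduce the system to the exact first integral $\gamma(s)u=c^2\int_m^{m_+}k(\eta)^{-1}d\eta$ (the paper's (\ref{psi})), read off the trichotomy from the sign and limit of $-\gamma'(m/u)k(m)-c^2$ (the paper's separator relation (\ref{separator}) and equation (\ref{000})), and compute the infinite total population from (\ref{TravelingEqn1}). Your write-up is, if anything, slightly more explicit than the paper's on sufficiency (the change of variables back to $\xi$ and the behavior near $m_+$) and on uniqueness of the profile via the monotone functions $\gamma(s)/s$ and $F(m)/m$; the one genuinely delicate step---upgrading the integrated relation to the pointwise limits in (\ref{speed})--(\ref{speed2}), including your claim that $\int_0\bigl(k(m)\Gamma(m/u(m))\bigr)^{-1}dm=\infty$ along the profile---is asserted rather than proved, but the paper's own proof is no more detailed at exactly that point.
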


The relation (\ref{speed}) shows that the speed of the traveling wave is independent of the amount of nutrient $m_+>0$. However, the amount of population density at the left side, $u_-=u_0$, decides the speed $c>0$. If (\ref{speed2}) is satisfied for all $u_0>0$ for a given $\tilde c>0$, then it is so for any given positive speed $c<\tilde c$. In many cases, it is also true for any $c>0$. Therefore, Theorem \ref{thm3}$(ii)$ implies that there exists a localized traveling wave solution with any speed $c>0$ with infinite total population. Such a localized traveling wave solution with infinite total mass is reported here for the first time.

\section{Proof of Theorems}

Let $0\le m_-<m_+<\infty$ and $u_+=0$. Then, the integration of (\ref{TravelingEqn}) on $(\xi,\infty)$ gives
$$
-cu=(\gamma(s)u)'=\gamma_mm'u+\gamma_uuu'+\gamma(s)u',
$$
where the boundary terms disappear by the boundary condition (\ref{BC}).
Divide both sides by $m'$ and substitute the relation ${u\over m'}={c\over k(m)}$, which is from the equation (\ref{TravelingEqn1}), and obtain
\begin{equation}\label{modified}
\Big(\gamma_mu+{c^2\over k(m)}\Big)+(\gamma_u u+\gamma){u'\over m'}=0.
\end{equation}
Therefore, the system (\ref{modified}) and (\ref{TravelingEqn1}) is equivalent to (\ref{TravelingEqn}) and (\ref{TravelingEqn1}) under the boundary condition $m_+>0$ and $u_+=0$.

Next we consider a relation in the phase plane. (A detailed phase plane analysis is given in Section \ref{sect.phase}.) Since $m>0$ and $m$ is strictly increasing, one may consider $u$ as a function of $m$ by changing the variable. Hence the relation in (\ref{modified}) can be written as
\begin{equation}\label{Exact}
Mdm+Udu=0,
\end{equation}
where $M:=\gamma_mu+{c^2\over k(m)}$ and $U:=\gamma_u u+\gamma$. Then,
$$
M_u=\gamma_{mu}u+\gamma_m=U_m,
$$
which shows the equation for the traveling wave solution (\ref{Exact}) is exact. This exactness is the key advantage of the chemotaxis model and makes a complete picture of traveling waves available. Now the solution is implicitly given by
$$
\psi(m,u):=\gamma u + c^2H(m)={\rm constant},\quad H'(m)={1\over k(m)}.
$$
Since $u\to u_+=0$, $s\to\lim_{m\to m_+} {m\over u}=\infty$, and hence $\gamma\to \ell>0$ as $m\to m_+$, we have the constant is $c^2H(m_+)$. Hence, we have $H=-\int_m^{m_+}{1\over k(\eta)}d\eta$ and obtain
\begin{equation}\label{psi}
\psi(m,u):=\gamma u-c^2\int_m^{m_+}{1\over k(\eta)}d\eta=0,\quad m_-<m<m_+.
\end{equation}
Then, since
$$
\psi_u=\gamma_u\,u+\gamma>0,\quad m_-<m<m_+,
$$
the implicit function theorem implies that there is a function $u=u(m)$ on $(m_-,m_+)$. It is clear that $u(m)\to 0\,(=u_+)$ as $m\to m_+$. Hence, for the existence of the  traveling wave solution of  (\ref{TravelingEqn})--(\ref{BC}), we only need to check if $u(m)\to u_-$ as $m\to m_-$. Therefore, showing a necessary condition for the boundary condition also gives a sufficient condition, which is one of main arguments in the following proofs.

Having the chemotaxis model in the form of (\ref{Eqn2}) is of great advantage since it gives an exact form for the traveling wave solution. This makes all the troublesome analyses simple and a complete picture accessible. Now the proof of Theorems are easy and clear.

\begin{proof}[of Theorem \ref{thm1}]

$(i)$. From the equation (\ref{TravelingEqn1}) we obtain
$$
u=c{m'\over k(m)}.
$$
Since $m>0,u>0$ and $m'>0$ for all $\xi\in\bfR$, the wave speed is strictly positive by the relation, i.e., $c>0$. Suppose that $m_->0$. Then, taking $m\to m_-$ limit for (\ref{psi}) gives
$$
\lim_{m\to m_-}\gamma\Big({m\over u(m)}\Big)u(m)=c^2\int_{m_-}^{m_+}{1\over k(\eta)}d\eta>0.
$$
If $u_-=0$, then $\ds\lim_{m\to m_-}\gamma\big({m\over u(m)}\big)u(m)=\ell\times0=0$ which contradicts the above relation. Hence $u_->0$, which also contradicts to Lemma \ref{lem2}. Therefore $m_-=0$ is a necessary condition.

$(ii)$. Suppose that the motility function is bounded by $\gamma<A$ for some constant $A>0$. Then, the relation (\ref{psi}) is written by
$$
Au\ge c^2\int_{m}^{m_+}{1\over k(\eta)}d\eta>0.
$$
Taking $m\to m_-(=0)$ gives $u_-\ge{c^2\over A}\int_{0}^{m_+}{1\over k(\eta)}d\eta>0$. If $\int_{0}^{m_+}{1\over k(\eta)}d\eta<\infty$, then it contradicts to Lemma \ref{lem2} and hence there is no traveling wave solution. If $\int^{m_+}_{0}{1\over k(\eta)}d\eta=\infty$, then $u(m)\to\infty$ as $m\to0$ and hence there is no traveling wave solution, neither. Therefore, for any boundary condition $u_-\in\bfR$, we conclude that there is no traveling wave solution if $\gamma$ is bounded.
$\hfill\qed$
\end{proof}

\begin{proof}[of Theorem \ref{thm2}]

($\Rightarrow$) We have $u_-=0$ by Lemma \ref{lem2} and $m_-=0$ by Theorem \ref{thm1}$(i)$. If $\gamma$ is bounded, then $u_->0$ by the relation (\ref{psi}). Therefore, $\gamma$ should be unbounded. Since $\gamma$ is smooth on $(0,\infty)$ and $\gamma(s)\to\ell$ as $s\to\infty$, we have $\gamma(s)\to\infty$ as $s\to0$.

($\Leftarrow$) It is enough to show that $u(m)\to u_-\,(=0)$ as $m\to m_-\,(=0)$, where $u=u(m)$ satisfies (\ref{psi}), i.e.,
$$
\gamma\Big({m\over u(m)}\Big) u(m)=c^2\int_m^{m_+}{1\over k(\eta)}d\eta,\quad m_-<m<m_+.
$$
Suppose that $u(m)\to u_0>0$ as $m\to 0$. Then, for any $c$ finite, as $m\to 0$,
\begin{eqnarray*}
&&\gamma\Big({m\over u(m)}\Big) u(m)\to\infty,\\
&&c^2\int_m^{m_+}{1\over k(\eta)}d\eta\to c^2\int_0^{m_+}{1\over k(\eta)}d\eta<\infty,
\end{eqnarray*}
which contradict to each other. Hence $u(m)\to 0$ as $m\to 0$.

The total population of the traveling wave band is computed by (\ref{TravelingEqn1}). Then,
$$
\int_{-\infty}^\infty u(\xi)d\xi
=\int_{-\infty}^\infty{cm'\over k(m)}d\xi
=\int_{0}^{m_+} {c\over k(m)}dm,
$$
which gives (\ref{population}).
$\hfill\qed$
\end{proof}

\begin{proof}[of Theorem \ref{thm3}]
We already have in Theorem \ref{thm1}$(i)$ that $m_-=0$ is a necessary condition.

$(i)$ Differentiate (\ref{psi}) with respect to $m$ and multiply $k(m)$ to obtain
\begin{equation}\label{000}
\gamma'\Big({m\over u(m)}\Big)k(m)+c^2=0, \quad 0<m<m_+.
\end{equation}
Therefore, (\ref{speed}) implies that $u(m)\to u_0$ as $m\to0$ if and only if there exists a traveling wave solution.

$(ii)$ Suppose that $u_->0$. Then, by the relation (\ref{000}), we have
$$
-\lim_{m\to0}\gamma'\Big({m\over u_-}\Big)k(m)=c^2.
$$
which contradicts the assumption (\ref{speed2}). Hence $u(m)\to0$ as $m\to0$ if and only if there exists a traveling wave solution. The total population of the traveling wave band is computed by (\ref{TravelingEqn1}). Then, since $c\ne0$,
$$
\int_{-\infty}^\infty u(\xi)d\xi
=\int_{-\infty}^\infty{cm'\over k(m)}d\xi
=\int_{0}^{m_+} {c\over k(m)}dm=\infty.
$$

$(iii)$ One may similarly obtain $u(m)\to\infty$ as $m\to0$. Therefore, there is no traveling wave solution with a finite boundary condition $u_-<0$.
$\hfill\qed$
\end{proof}

\section{Phase Plane Analysis}\label{sect.phase}

The traveling wave solutions satisfy the implicit relation (\ref{psi}). We study the structure of the traveling wave solution in the phase plane. Note that (\ref{Exact}) gives the regularity of the solution, i.e.,
\begin{equation} \label{regularity}
\frac{du}{dm} =\frac{-1}{\gamma_{u}u+\gamma}\Big(\gamma_{m}u+{c^{2}\over k(m)}\Big),\quad m_-<m<m_+.
\end{equation}
Hence, the derivative of $u=u(m)$ is continuous and bounded. This relation also implies that the monotonicity of the solution is divided along a curve
$$
\gamma_{m}u+{c^{2}\over k(m)}=0.
$$
We will call this curve a separator since the monotonicity of trajectories are different in the two regions separated by this curve. Since $\gamma_mu=\gamma'(s)$, this curve is given by the relation
\begin{equation} \label{separator}
-\gamma'(s)={c^{2}\over k(m)},
\end{equation}
which is the relation used to characterize the traveling waves in Theorem \ref{thm3} as in (\ref{speed}). In this section we analyze the trajectories of traveling waves in the phase plane in terms of the separator.

\begin{figure}
{\large
\psfrag{a}{\hskip -7mm${c^2\over k(m_0)}$} \psfrag{b}{\hskip -7mm${c^2\over k(m_1)}$} \psfrag{c}{$s$}
\psfrag{d}{\hskip -3mm$m_0$}\psfrag{f}{$m_1$} \psfrag{e}{\textcolor{red}{\hskip -2mm$m_+$}} \psfrag{i}{${du\over dm}>0$} \psfrag{j}{${du\over dm}<0$} \psfrag{g}{$m$}\psfrag{h}{$u$}
\hskip 0.9cm
\includegraphics[width=0.9\textwidth]{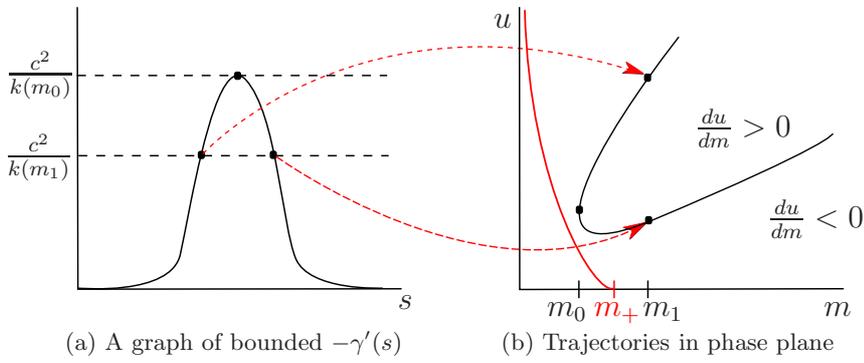}
}

\hskip 1cm(a) A graph of bounded $-\gamma'(s)$~~~~~~~~~~~
(b) Trajectories in phase plane
\caption{\label{fig1}The monotonicity of a traveling wave trajectory in the phase plane is divided by a curve called `separator' in (b). This diagram shows its relation to the graph of $\gamma'$ in (a) when $\gamma'$ is bounded.}
\end{figure}
Consider a case that $-\gamma'(s)$ is bounded (see Figure \ref{fig1}(a)\,). First we find the curve of separator that satisfies (\ref{separator}). Let $-\gamma'(s)\le\kappa_0$ and $-\gamma(s_0)=\kappa_0$ with $s_0\ne0$. Then, since $k(m)\to 0$ monotone as $m\to0$, there exists $m_0>0$ such that ${c^{2}\over k(m_0)}=\kappa_0$. Therefore, a point $(m_0,u_0)$ is on the curve if $s_0={m_0\over u_0}$ or $u_0={m_0\over s_0}$. Since ${c^{2}\over k(m)}>\kappa_0$ for all $m<m_0$, the separator does not enter the region $m<m_0$ (see Figure \ref{fig1}(b)\,). Let $m_1>m_0$. Then, there exists at least one $s>0$ such that ${c^{2}\over k(m_1)}=-\gamma'(s)$. For example, in Figure \ref{fig1}(a), there are two such points. Therefore, there should be two points in the separator corresponding the value $m=m_1$. Since $u={m\over s}$, the point with smaller satisfaction measure $s$ corresponds to the upper point as in Figure \ref{fig1}. A rough sketch of the separator is given in Figure \ref{fig1}(b). If the maximum point is $s_0=0$, then the vertical line $m=m_0$ becomes a asymptote of the separator.

Next, consider a trajectory $u=u(m)$ of a traveling wave in the phase plane that connects an unstable steady state $(m_+,0)$. Since ${du\over dm}<0$ for all $m<m_0$, there is no chance that the trajectory connects the origin as $m\to0$. One might guess that $u(m)\to u_0>0$ as $m\to0$. However, that is not possible by Theorem \ref{thm3}. Clearly,
$$
-\lim_{m\to0}\gamma'\Big({m\over u_0}\Big)k(m)=0
$$
for any $u_0>0$ since $\gamma'$ is bounded and $k(m)\to0$ as $m\to0$. Therefore, Theorem \ref{thm3}$(iii)$ implies that $u(m)\to\infty$ as $m\to0$. Otherwise, there should exist a traveling wave. Therefore, the only possibility is that the trajectory blows up as in Figure \ref{fig1}(b).

The second case is when $\gamma'(s)$ is unbounded. Since $\gamma'(s)\to0$ as $s\to\infty$ and $\gamma$ is smooth on $(0,\infty)$, $-\gamma'(s)\to\infty$ as $s\to0$ (see Figure \ref{fig2}(a)\,). Since $\gamma'$ is unbounded, there exists at least one point in the separator for each $m>0$. We assume that $\gamma'$ is monotone for $s$ small enough as in Theorem \ref{thm3}. Then, there are three possible cases for the limits of the separator as $m\to0$ which are given in Figures \ref{fig2}(b), \ref{fig2}(c) and \ref{fig2}(d). These three cases correspond to the three cases of Theorem \ref{thm3} in that order. In fact the conditions of (\ref{speed}) and (\ref{speed2}) are the ones of (\ref{separator}) for $m$ small.
\begin{figure}
\begin{minipage}[t]{0.5\textwidth}
{\large\psfrag{s}{${s}$}
\psfrag{a}{$~$}
\psfrag{b}{\hskip -6mm$-\gamma'$}
\hskip 4mm
\includegraphics[width=0.7\textwidth]{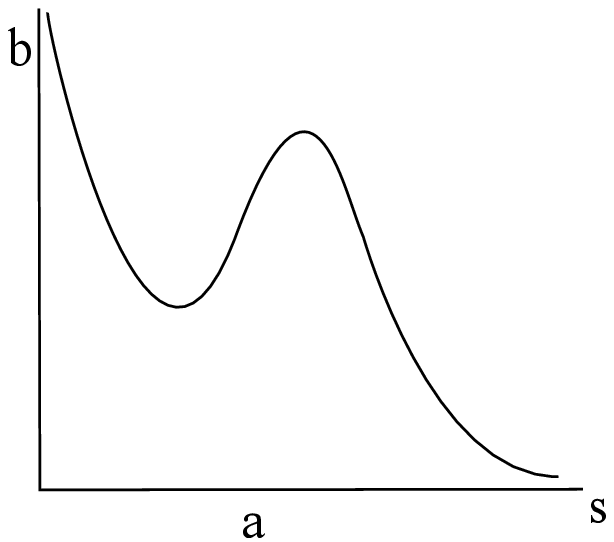}}
\newline(a) A graph of unbounded $-\gamma'(s)$
\end{minipage}
\begin{minipage}[t]{0.5\textwidth}
{\large
\psfrag{i}{${du\over dm}>0$}
\psfrag{j}{${du\over dm}<0$}
\psfrag{m}{$m$}
\psfrag{u}{\hskip -1mm$u$}
\psfrag{x}{\hskip -3mm$m_+$}
\psfrag{y}{\hskip -3mm$m_+$}
\psfrag{z}{\hskip -3mm$m_+$}
\hskip 4mm
\includegraphics[width=0.7\textwidth]{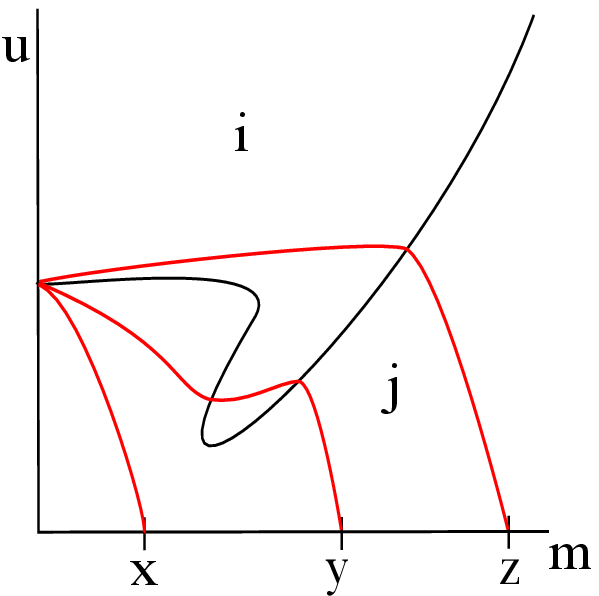}}
\newline(b) Trojectories of case in Thm \ref{thm3}$(i)$
\end{minipage}
\begin{minipage}[t]{0.5\textwidth}
{\large
\psfrag{i}{${du\over dm}>0$}
\psfrag{j}{${du\over dm}<0$}
\psfrag{m}{$m$}
\psfrag{u}{\hskip -1mm$u$}
\psfrag{v}{\hskip -3mm$m_+$}
\psfrag{w}{\hskip -3mm$m_+$}
\hskip 4mm
\includegraphics[width=0.7\textwidth]{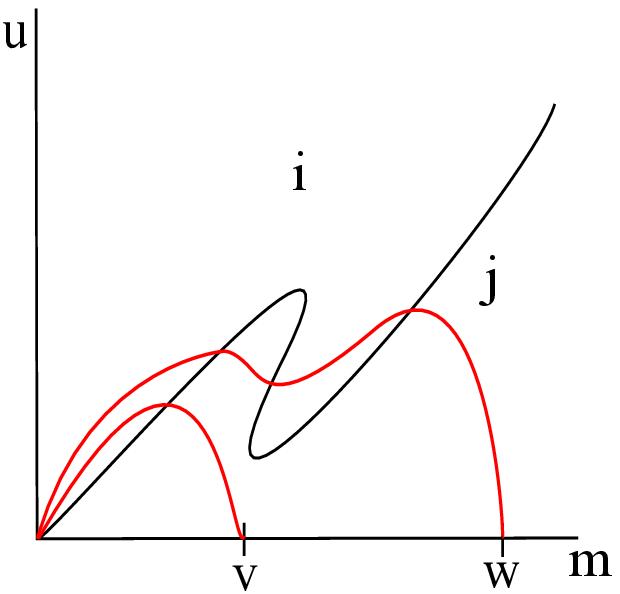}}
\newline(c) Trojectories related to Thm \ref{thm2},\ref{thm3}$(ii)$
\end{minipage}
\begin{minipage}[t]{0.5\textwidth}
{\large
\psfrag{i}{\hskip -3.5mm${du\over dm}>0$}
\psfrag{j}{${du\over dm}<0$}
\psfrag{m}{$m$}
\psfrag{u}{\hskip -1mm$u$}
\psfrag{a}{\hskip -3mm$m_+$}
\hskip 4mm
\includegraphics[width=0.7\textwidth]{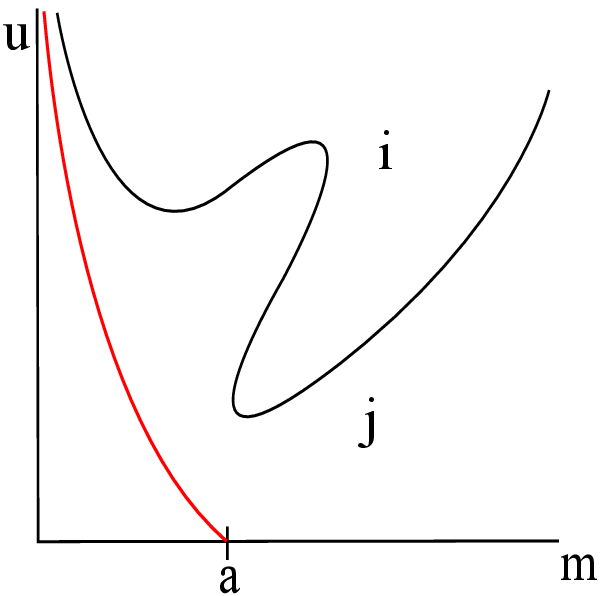}}
\newline(d) Trojectories related to Thm \ref{thm3}$(iii)$
\end{minipage}
\caption{\label{fig2} A graph of $-\gamma'$ is given in (a) when $-\gamma'(s)\to\infty$ as $s\to0$. Three types trajectories of traveling waves are given with dashed lines in (b),(c) and (d). The monotonicity of a traveling wave trajectory is divided by a curve called `separator' given in solid lines.}
\end{figure}

Next, consider a trajectory $u=u(m)$ of a traveling wave in the phase plane that connects an unstable steady state $(m_+,0)$. Theorem \ref{thm3} implies that the trajectory connects the same limit of the separator as $m\to0$ (see Figures \ref{fig2}(b),\ref{fig2}(c) and \ref{fig2}(d)). One can formally see the reason directly from (\ref{psi}). If $u(m)\to u_0>0$ as $m\to0$, the relation gives
$$
\gamma\Big({m\over u_0}\Big) u_0-\int_m^{m_+}{c^2\over k(\eta)}d\eta\to0\mbox{~~~as~~}m\to0.
$$
Therefore, this relation is possible only if two terms diverges with the same speed. If the integral term grows too slow in compare with a given motility function or vice-versa as $m\to0$, then there is no such $u_0>0$. Hence the derivative of both terms with respect to $m$ decides the limit, which are
$$
{\partial \over \partial m}\Big(\gamma\Big({m\over u_0}\Big)u_0\Big) =\gamma'\Big({m\over u_0}\Big),\quad
{\partial \over \partial m}\int_m^{m_+}{c^2\over k(\eta)}d\eta=-{c^2\over k(m)}.
$$
In other words, even if the separator and the traveling wave trajectories satisfies different relation, their behaviors near $m=0$ are decided by the same dynamics and hence they have the same limit as $m\to0$.

\section{Numerical Simulations}

The purpose of this section is to simulate the appearance of a localized traveling wave and compare it to the theoretical one developed in the previous section. Consider a Cauchy problem of our chemotaxis model
\begin{equation}\label{simulation}
\left\{
\begin{aligned}
&u_t= (\gamma(s)u)_{xx},&m_t= -k(m)u,\\
&u(x,0)=0.1\chi_{(-0.2,0.2)},&m(x,0)=0.1,\ \\
\end{aligned}
\right.
\quad x\in\bfR,\ t>0,
\end{equation}
where
\begin{equation}\label{Ex1}
\gamma(s)=0.005(s^{-4}+1),\quad k(m)=5\,m^{1/2}.
\end{equation}
Since both the equation and the initial value have symmetry with respect to the origin, the solution is also so. Hence, the initial population distribution splits in two pieces and then propagates into opposite directions symmetrically. Hence the population size of each piece is $N=0.02$. In this case we have
$$
\int_0^{m_+}{1\over k(m)}dm={1\over 5}\int_0^{0.1}m^{-1/2}dm= {2\sqrt{0.1}\over 5}<\infty.
$$
Therefore, this is the case corresponding to Theorem \ref{thm2} and the traveling wave speed given by (\ref{population}) is
$$
c=0.02\times {5\over 2\sqrt{0.1}}\cong 0.158.
$$

\begin{figure}
\begin{minipage}[t]{0.5\textwidth}
\centering
\includegraphics[width=0.9\textwidth]{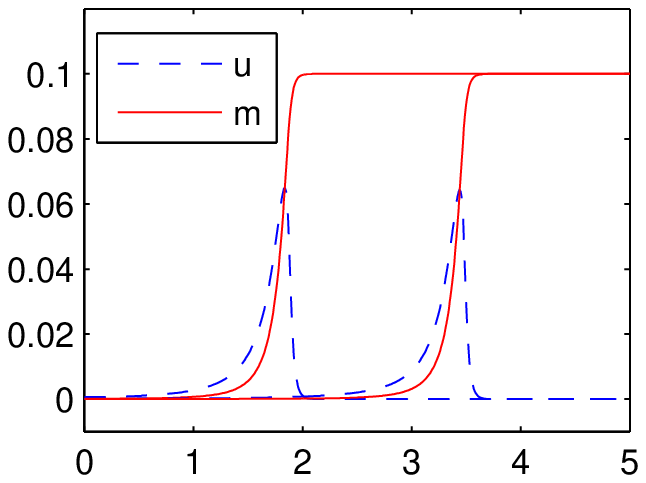}
\newline(a) Traveling waves at two different moments $t=10$ and $20$
\end{minipage}
\begin{minipage}[t]{0.5\textwidth}
\centering
\includegraphics[width=0.9\textwidth]{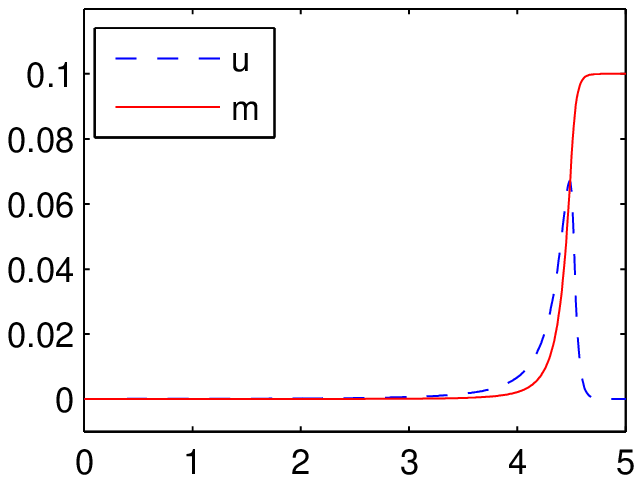}
\newline(b)Pulse type traveling wave for\\ (\ref{Ex1})--(\ref{BCat5}) with $c=0.158$
\end{minipage}
\caption{\label{fig3} [A case with $\int_0^1 {1\over k(m)}dm<\infty$] The numerical simulations for (\ref{simulation}) is given in (a) at two different times $t=10$ and $20$. A pulse type traveling band of finite population is formed and moves with a constant speed. The ode system (\ref{TravelingEqn2}) for traveling wave solution gives the same traveling wave solution.}
\end{figure}

A numerical simulation for the traveling wave moving to right is given in Figure \ref{fig3}(a). The profile of the wave has been displayed at two different incidents $t=10$ and $20$. The numerical wave speed is about $c=0.160$.

Now consider the solution of the ODE (\ref{TravelingEqn})--(\ref{TravelingEqn1}). Since $m_-=u_+=0$ are necessary conditions and $m_+>0$, we assume them. Then, the second order equation in the first row is turned into a first order one after integrating it on $(\xi,\infty)$. Then the equations are written as
\begin{equation}\label{TravelingEqn2}
\left\{
\begin{aligned}
c(\gamma'(s)s-\gamma)u'&=c^2u+\gamma'(s)k(m)u,\\
c\,m'&= k(m)u,
\end{aligned}
\right.
\quad \xi\in\bfR,
\end{equation}
with boundary values
$$
u_-=u_+=m_-=0,\qquad m_+=0.1.
$$
A solution of this traveling wave equation is given in Figure \ref{fig3}(b). For the computation, boundary conditions and the wave speed are given by
\begin{equation}\label{BCat5}
m(5)=0.1,\quad u(5)=10^{-7}, \quad c=0.158.
\end{equation}
The solution is then computed for $x<5$. For the comparison with the numerical simulation of (\ref{simulation}) in Figure \ref{fig3}(a), we have used the same wave speed in this computation. One may observe the same structure of the solution. Since the numerical simulation usually contains extra numerical viscosity, the solution of the ODE system look a little bit steeper.

Next we consider an example of a front type traveling wave corresponding to the case of Theorem \ref{thm3}$(i)$. First, let
\begin{equation}\label{Ex2}
\gamma(s)=0.005(s^{-2}+1),\quad k(m)=10\,m^{3}.
\end{equation}
In this example we took a consumption rate $k$ satisfying $\int_0^{m_+}{1\over k(m)}dm=\infty$.
%
%
%For a simulation of the front type traveling wave, we consider the equation with a boundary condition,
%\begin{equation}\label{simulationBC}
%\left\{
%\begin{array}{lll}
%u_t= (\gamma(s)u)_{xx},~~~&m_t= -k(m)u, ~~~& x>0,\ t>0,\\
%u(x,0)=0,&m(x,0)=0.1, & x>0,\ t=0,\\
%u(x,t)=0.1,&m_0(x,0)=0,\ & x=0,\ t>0,\\
%\end{array}
%\right.
%\end{equation}
%where Dirichlet boundary conditions are given for both $u$ and $m$ for our convenience.
%
%
We are looking for a front type traveling wave with $m_+=u_-=0.1$ and $u_+=m_-=0$. Then, the relation (\ref{speed}) is written as
$$
-\lim_{m\to0}\gamma'\Big({m\over 0.1}\Big)k(m)=0.01\Big({m\over 0.1}\Big)^{-3} 10 m^3 =(0.1)^4=c^2.
$$
Hence the traveling wave speed is $c=0.01$. The traveling wave for this case can be numerically computed by solving the ordinary differential equation (\ref{TravelingEqn2}). In Figure \ref{fig4}(a), a traveling wave is given after solving it with boundary conditions,
\begin{equation}\label{BCat0}
m(100)=0.1,\quad u(100)=10^{-7},
\end{equation}
and a wave speed $c=0.01$.
\begin{figure}
\begin{minipage}[t]{0.5\textwidth}
\centering
\includegraphics[width=0.8\textwidth]{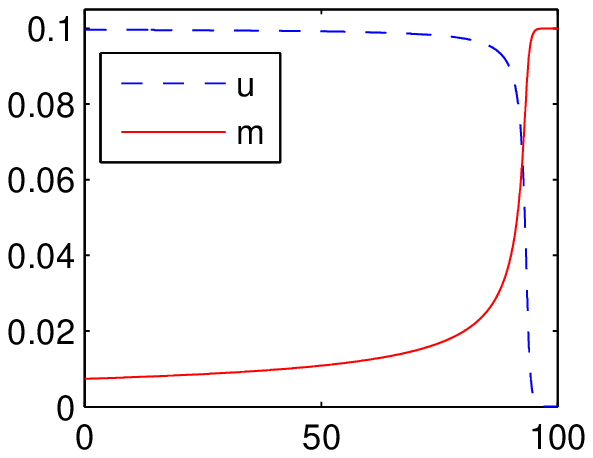}
\newline(a) Front type traveling wave with\\ (\ref{Ex2}), (\ref{BCat0}) and $c=0.01$
\end{minipage}
\begin{minipage}[t]{0.5\textwidth}
\centering
\includegraphics[width=0.8\textwidth]{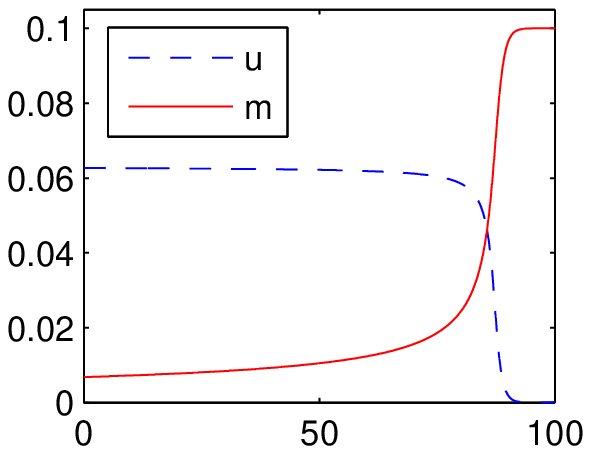}
\newline(b) Front type traveling wave with\\ (\ref{Ex2}), (\ref{BCat0}) and $c=0.005$
\end{minipage}
\begin{minipage}[t]{0.5\textwidth}
\centering
\includegraphics[width=0.8\textwidth]{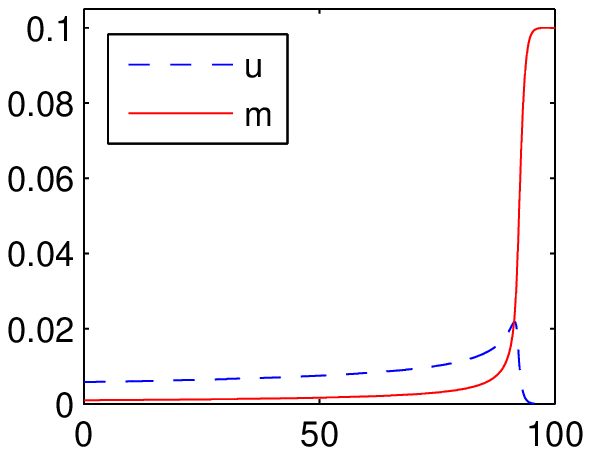}
\newline(c) Pulse type traveling wave with\\ (\ref{Ex3}), (\ref{BCat0}) and $c=0.0077$
\end{minipage}
\begin{minipage}[t]{0.5\textwidth}
\centering
\includegraphics[width=0.8\textwidth]{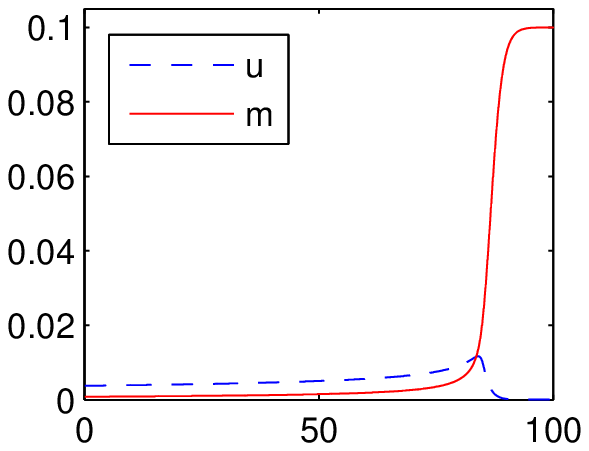}
\newline(d) Pulse type traveling wave with\\ (\ref{Ex3}), (\ref{BCat0}) and $c=0.0039$
\end{minipage}
\begin{minipage}[t]{0.5\textwidth}
\centering
\includegraphics[width=0.8\textwidth]{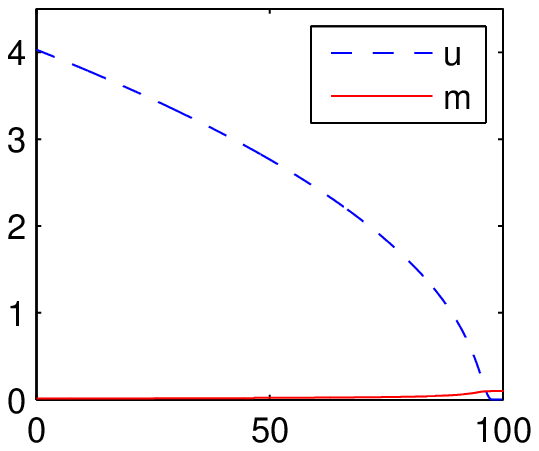}
\newline(e) Blowing up solution with\\ (\ref{Ex4}), (\ref{BCat0}) and $c=0.0224$
\end{minipage}
\begin{minipage}[t]{0.5\textwidth}
\centering
\includegraphics[width=0.8\textwidth]{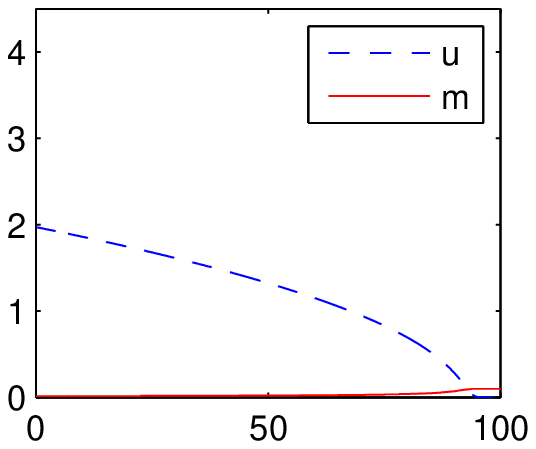}
\newline(f) Blowing up solution with\\ (\ref{Ex4}), (\ref{BCat0}) and $c=0.0112$
\end{minipage}
\caption{\label{fig4} [Three cases with $\int_0^1 {1\over k(m)}dm=\infty$] Front type traveling waves are given in (a) and (b) with different wave speed. The pulse type traveling waves in (c) and (d) are of infinite total population. In (e) and (f) the $u(\xi)\to\infty$ as $\xi\to-\infty$, which shows that there is no traveling wave solution for any boundary value $u_-<\infty$.}
\end{figure}
One may easily observe that the population density $u$ converges to $u_-=0.1$ as $\xi\to-\infty$. In Figure \ref{fig4}(b), the traveling wave with a wave speed $c=0.005$ is given with the same boundary conditions.

Next we consider an example of a localized traveling wave corresponding to the case of Theorem \ref{thm3}$(ii)$. First, let
\begin{equation}\label{Ex3}
\gamma(s)=0.005(s^{-3}+1),\quad k(m)=10\,m^{2}.
\end{equation}
Then, for any given $u_0>0$, one may easily see that
$$
-\lim_{m\to0}\gamma'\Big({m\over u_0}\Big)k(m)=\infty.
$$
Therefore, Theorem \ref{thm3}$(ii)$ implies that there exists a pulse type traveling wave solution for any given wave speed $c>0$ which has infinitely large total population. Therefore, the waves are not distinguishable in terms of the population size or the boundary condition $u_-$. The only measure to distinguish them is the wave speed. In Figures \ref{fig4}(c) and \ref{fig4}(d), two pulse type traveling waves are displayed with wave speeds $c=0.0077$ and $0.0039$, respectively.

Finally we consider an example of blowing up solutions of corresponding to the case of Theorem \ref{thm3}$(iii)$. First, let
\begin{equation}\label{Ex4}
\gamma(s)=0.005(s^{-1}+1),\quad k(m)=10\,m^{4}.
\end{equation}
Then, for any given $u_0>0$, one may easily see that
$$
-\lim_{m\to0}\gamma'\Big({m\over u_0}\Big)k(m)=0.
$$
Therefore, Theorem \ref{thm3}$(iii)$ implies that there is no traveling wave solution for any given boundary data $u_-\ge0$ for any given wave speed $c>0$. In Figures \ref{fig4}(e) and \ref{fig4}(f), two examples of blowing up solution of (\ref{TravelingEqn2}) with wave speeds $c=0.0224$ and $0.0112$, respectively. One may observe that the $u(\xi)$ diverges as $\xi\to-\infty$.

\section{Conclusion}

Chemotaxis is a phenomenon that biological organisms move toward or away from higher concentration of signaling chemicals such as pheromone or nutrients. However, it is puzzling how a single micro-scale organism such as bacteria recognize the macro-scale gradient. There are several models based on sensing spatial or temporal gradients of concentration. However, the main contribution of the classical Keller-Segel model on this issue is that, even if the individuals sense the gradient inaccurately, the chemotactic phenomenon can be observed as a whole group.

The chemotaxis model studied in this paper is based on a different concept. Individual organisms do not have any information of the nutrient gradient. The only assumption is that, if organisms take enough food, they reduce the motility to stay in the place. If not, they increase their motility to find food. Even though individual movements are in a random manner, conditional eagerness for migration produces advection. Hence we may observe chemotaxis phenomenon in the whole group level, but not in the individual level.

Furthermore, the resulting chemotaxis model includes the original Keller-Segel model as a special case that the motility depends only on the nutrient concentration thanks to the relation between chemosensitivity and diffusivity in (\ref{chi-mu}). Instead of breaking this original link, we kept the relation in terms of starvation driven diffusion. In fact, this relation between two terms gives an exact form for the traveling wave solution and makes all the analysis so simple that a complete scenario of traveling wave solutions from pulse type to front type are classified in terms of the motility function and the consumption rate.

\section*{Appendix: A derivation of a starvation driven diffusion}

In this section we introduce a short derivation of a starvation driven diffusion (see \cite{CK} for detailed discussions). Consider a random walk system with a constant walk length $\Delta x$ and a constant jumping time interval $\Delta t$. Let $0<\gamma(x_i)\le1$ be the probability for a particle to depart a grid point $x_i$ at each jumping time. (For a usual random walk system every particle departs at each jumping time and hence $\gamma=1$.) Each particle moves to one of two adjacent grid points, $x_{i+1}$ or $x_{i-1}$, randomly. Let $U(x_i)$ be the number of particles placed at the grid point $x_i$. Then, the particle density is $u=U/\Delta x$. Hence the net flux that crosses a middle point $x_{i+1/2}:={x_i+x_{i+1}\over2}$ is
\begin{eqnarray*}
J\Big|_{x_{i+1/2}}&=&{\gamma(x_i)|\Delta x|u(x_i)\over 2|\Delta t|}-{\gamma(x_{i+1})|\Delta x|u(x_{i+1})\over 2|\Delta t|}\\
&\cong&-{|\Delta x|^2\over 2\Delta t}(\gamma u)_x\Big|_{x_{i+1/2}}.
\end{eqnarray*}
The corresponding diffusion equation comes from the conservation law
$$
u_t=-\div(J)={|\Delta x|^2\over 2\Delta t}(\gamma u)_{xx}.
$$
After a time rescaling we obtain the desired equation
$$
u_t=(\gamma u)_{xx}.
$$
Notice that this derivation is valid since the probability $\gamma$ depends only on the point $x_i$ that the particle departs. For a more discussion including other cases, see Okubo and Levin \cite[\S 5.4]{MR1895041}.

It is reasonable to believe that biological organisms will increase the departing rate if they are starved. Individuals would be starved if there is no food or if there are too many population. Hence, if $m$ is the amount of food available and $u$ is the population, it is reasonable to assume that $\gamma=\gamma(m,u)$ and
$$
\gamma_m\le0\quad\mbox{and}\quad
\gamma_u\ge0.
$$
The process with these properties is called a starvation driven diffusion. If we set $s={m\over u}$, then $s$ is the amount of food that each individuals may obtain on average. We may consider a case that $\gamma$ is a decreasing function of $s$, i.e., $\gamma(m,u)=\gamma({m\over u})$. Then, clearly,
$$
\gamma_m=\gamma'(s){1\over u}\le0 \quad\mbox{and}\quad
\gamma_u=-\gamma'(s){m\over u^2}\ge0.
$$
This is the case we have considered in this paper.

\bibliographystyle{amsplain}
\bibliography{chemotaxis}

\end{document}